%

\documentclass{collective-intelligence}

\usepackage{amsmath}
\usepackage{algorithm}
\usepackage{algorithmic}
\usepackage{multirow}


\usepackage{graphicx}
\usepackage{epsfig}

\begin{document}
\bibliographystyle{agsm}

\title{Thermodynamic Principles in Social Collaborations}%
%
%
%
%

\numberofauthors{4} 
%
\author{
%
%
\alignauthor
Huan-Kai Peng\\
       \affaddr{Carnegie Mellon University}\\
       \affaddr{Pittsburgh PA 15213}\\
       \email{pumbaapeng@cmu.edu}
\alignauthor
Ying Zhang and Peter Pirolli\\
       \affaddr{Palo Alto Research Center}\\
       \affaddr{Palo Alto CA 94304}\\
       \email{\{ying.zhang,peter.pirolli\}@parc.com}
\alignauthor Tad Hogg\\
       \affaddr{Inst. for Molecular Manufacturing}\\
       \affaddr{Palo Alto, CA 94301}\\
       \email{tadhogg@yahoo.com}
}

\maketitle

\begin{abstract}
A thermodynamic framework is presented to characterize the evolution of efficiency, order, and quality in social content production systems, and this framework is applied to the analysis of Wikipedia. Contributing editors are characterized by their (creative) energy levels in terms of number of edits. We develop a definition of entropy that can be used to analyze the efficiency of the system as a whole, and relate it to the evolution of power-law distributions and a metric of quality. The concept is applied to the analysis of eight years of Wikipedia editing data and results show that (1) Wikipedia has become more efficient during its evolution and (2) the entropy-based efficiency metric has high correlation with observed readership of Wikipedia pages.
\end{abstract}

\section{I. Introduction}
\label{sec:1}
A social production system \cite{Benkler_2006} is characterized by a process in which the creative energies of large number of people contribute to large projects, mainly without traditional, centralized, hierarchical organizational mechanisms. One question that may be asked is whether such systems collectively adapt to more efficiently and effectively harness the creative energies of contributors and produce higher quality outputs. In this paper, we present a thermodynamic framework that allows us to characterize the efficiency of social production systems in several ways. We analyze the evolution of efficiency in Wikipedia and show how this relates to content quality in terms of readership demand.

In this paper, we consider the social production of content in Wikipedia as an open thermodynamic system. Each editor corresponds to a particle and the number of edits associates with a level of particle energy. We then exploit the concept of entropy in statistical mechanics to understand the underlying principles for efficiency in social production systems. We apply these insights to analyze the evolution of efficiency and quality in Wikipedia. 

We also explain the observed {\em power-law} distribution of editing activity as a consequence of how edits relate to {``}energy{''} devoted to Wikipedia in a thermodynamic setting.
At an aggregate level, the power-law distribution of user activity on peer-production websites could arise through two mechanisms. First, the observed
diversity of behavior could reflect an extreme heterogeneity of preference among
the potential user population (Wikipedia editors are ``born") \cite{Panciera_2009}.
Second, the behavior could be due to diversity of experience of new users after
they start participating on the site, e.g., positive or negative feedback from
other users on their contributions (Wikipedia editors are ``made") \cite{Wilkinson_2008}\cite{Ren_2007}. 
In contrast to these studies, we consider a third
possibility, namely that the diversity arises from a relatively short-term decrease
in effort required to make additional contributions with experience of editing. 
This corresponds to the general improvement people have with
cognitive tasks \cite{Newell_1981}\cite{Pirolli_1985}\cite{Shrager_1988}. Specifically, we propose that the power-law behaviors at the level of contributions arise largely due to the decreasing effort required for a given
user to make additional edits in a relatively short period of time (e.g.,
one month) or to a particular page. This leads to a {\em logarithmic} energy model for edits.
 
With each number of edits $v$ we associate ``energy" $\log(v)$. The base of the
logarithm is arbitrary, and we use the natural logarithm, which is the common
convention in statistical mechanics. 
This logarithmic dependence means, for instance, it is easier for someone to make their 10th edit, after making 9 already, than it is to make their 2nd edit. 
This is reasonable --- people gain experience with the subject matter of the page so presumably can contribute
an edit with less time. 
The fact that people who edit a lot return more often in monthly statistics of Wikipedia supports the logarithmic energy assumption. Observations that user activity rates are lognormally distributed \cite{Hogg_2009} \cite{Hogg_2010}, and multiplicative processes result in power-law distributions \cite{Mitzenmacher_2004} provide further evidence. 

We put social production into the context of statistical mechanics perspective, by defining the notions of entropy, energy, temperature and free energy, and deriving their relationships in power-law distributions. With the Wikipedia editing statistics, we show that entropy efficiency (entropy per energy) and entropy reduction (relative entropy w.r.t. its maximum) are good metrics for quantifying efficiency and quality of the collaboration.

The rest of this paper is organized as follows. Section II defines the entropy related metrics and discusses their relationship with power-law distribution. Section III looks at monthly editing data in eight years of Wikipedia and illustrates the evolution of entropy-based metrics. Section IV analyzes page-wise entropy metrics and their relationship to quality of production. With a set of readership data, we show high correlation between entropy efficiency and readership for Wikipedia pages. 


\section{II. Metrics for Order and Efficiency}
\label{sec:2}
We model social collaboration as an open thermodynamic system consisting of a set of particles, each holding a certain level of energy. We define a set of metrics analogous to thermodynamic quantities: energy, entropy and temperature, to measure the order and efficiency of such a system. We show that under a logarithmic energy level, a system would self-organize with power-law distributions under thermodynamic principles.
\subsection{Entropy, Energy and Temperature}
Let $I$ be a set of individuals, e.g., a set of particles or editors, and $V$ be a set of positive values that an individual can hold, e.g., energy of a particle or the number of edits. A {\em collection} is a mapping $I \rightarrow V$. For the Wikipedia editing system, $I$ is a set of editors and $V$ is a set of positive integers,  and $v_i$ is the number of edits editor $i$ contributed. 
Here a collection can be as large as the whole collaborative community, say, the whole Wikipedia, or as small as a sub-community, say, a page.

Let $s_v$ be the number of individuals in $I$ with value $v$ in $V$,  and $N=|I|$ be the total number of individuals in the collection, we define {\em entropy} for such a collection as:
\begin{equation}
S =-\sum_v p_{v}\log(p_v) \mbox{ where } p_v= \frac{s_v}{N}.
\label{eq:ent}
\end{equation}
In this definition, if all individuals have the same value, $S$ is minimized to be $0$, If, on the other hand, all individuals have different values, $S$ would be maximized to $\log(N)$.
The collection is in high \emph{order} if the entropy is low, or contributions are even among individuals, and in disorder if the entropy is high, or there is divergent contributions among individuals. A complex and effective system would be in a state between order and disorder \cite{Mitchell_1993}.

The entropy $S$ can also be defined for a collection with infinite number of individuals; it has a physical meaning and is related to energy and temperature in thermodynamics. When a thermodynamic system has a large number of independent particles, according to the Boltzmann distribution \cite{Ma_1985},
\begin{equation}
p_{u} \propto e^{-\frac{u}{kT}}
\end{equation}
where $p_{u}$ is the probability of a particle at a given state with energy level $u$, $k$ is Boltzmann constant, and $T$ is temperature.  The Boltzmann distribution ensures that the particles have exponential distribution with respect to the energy levels, i.e., high energy particles in a given state are much less likely than lower energy particles. Entropy $S$ for this distribution becomes:
\begin{equation}
S = - \sum_{u} p_{u} \log (p_u) = \frac{E}{kT}+\log(Z)
\label{eq:tS}
\end{equation}
where ${E} = \sum_u p_{u}u$ is the {\em average} energy per particle,
\begin{equation}
Z=\sum_u e^{-\frac{u}{kT}}
\label{eq:partition}
\end{equation}
is the partition function, and
$S$ corresponds to thermodynamic entropy {\em per particle}. Note that Eq. \ref{eq:tS} can be rewritten as 
\begin{equation}
{E} = kTS - kT\log(Z)
\end{equation}
and 
\begin{equation}
A={E}-kTS = -kT\log(Z)
\label{eq:A}
\end{equation}
is called {\em free energy} \cite{Goodstein_1975}. Free energy is an important concept in thermodynamics, which is the amount of useful energy that can transfer to work.

Now assuming that the energy for the number of edits is proportional to its logarithmic value (as argued in Section I), i.e., $u=\log(v)$, and that the Boltzmann distribution holds for logical particles, we have:
\begin{equation}
p_{v} \propto e^{-\frac{\log(v)}{kT}} \rightarrow p_{v} \propto v^{-\alpha}, \mbox{where} \hspace{1mm} \alpha = \frac{1}{kT}
\end{equation}
which is a power-law distributions in values in $V$, with power-law co-efficient $\frac{1}{kT}$. In other words, $\alpha$ corresponds to the inverse of temperature. The higher the $\alpha$, the lower the temperature. 

Power-law distributions have been discovered in many social media including Wikipedia \cite{Wilkinson_2008}. From thermodynamic principles, we argue that they come naturally because the energy level is logarithmic in terms of the levels of the activities.

\subsection{Entropy Reduction and Entropy Efficiency}
Entropy measures disorder or the amount of uncertainty in the system. In contrast to entropy, {\em entropy reduction} (i.e., {\em effective entropy} \cite{Tononi_2008}) is a relative entropy with respect to the maximum entropy given the number of individuals in the collection:
\begin{equation}
R = \log(N) - S
\end{equation}
where $N$ is the total number of individuals and $S$ is the entropy defined in Eq. \ref{eq:ent}.
$R$ is maximum when $S$ is minimum --- $S$ measures disorder; $R$ measures order, and in particular, the amount of entropy reduction due to order. If the collection is from a random uniform distribution, $S$ will be close to $\log(N)$ and $R$ close to 0.

Let ${E}$ be the average energy defined in Eq. \ref{eq:tS}, we have {\em entropy efficiency} as average entropy per unit energy level:
\begin{equation}
Q = \frac{S}{E}.
\label{eq:eff}
\end{equation}
Note that according to Eq. \ref{eq:tS}, 
\begin{equation}
Q = \frac{S}{E} = \frac{1}{kT} + \frac{\log(Z)}{E} = \alpha + \frac{\log(Z)}{E}.
\label{eq:Q}
\end{equation}
The following theorem claims that power-law distributions maximize entropy efficiency for logarithmic energy levels.
\newtheorem{theorem}{Theorem}
\begin{theorem}
Power-law distributions maximize entropy efficiency $Q$ (Eq. \ref{eq:eff}) when ${E} = \sum_v p_v \log(v)$.
\label{th:eff}
\end{theorem}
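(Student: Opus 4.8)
The plan is to recognize the claim as a constrained optimization and solve it by Lagrange multipliers, using strict concavity of entropy to promote the resulting stationary point to a genuine maximum. First I would state the problem precisely: over all probability distributions $\{p_v\}$ on $V$ (with $p_v\ge 0$ and $\sum_v p_v=1$), maximize $Q=S/E$, where $S=-\sum_v p_v\log p_v$ and $E=\sum_v p_v\log v$. The useful structural remark is that maximizing a ratio of this kind decouples: for each attainable energy level $E$ let $f(E)$ be the largest entropy among distributions with mean energy exactly $E$; then $\sup_p Q=\sup_E f(E)/E$, so it suffices to show that every distribution realizing $f(E)$ is a power law, after which the theorem follows for whichever $E$ maximizes $f(E)/E$.

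Next I would solve this inner, fixed-$E$ problem. Form the Lagrangian
\begin{equation}
\mathcal{L}=-\sum_v p_v\log p_v-\beta\Big(\sum_v p_v\log v-E\Big)-\mu\Big(\sum_v p_v-1\Big),
\end{equation}
with $\beta,\mu$ the multipliers for the energy and normalization constraints. Stationarity $\partial\mathcal{L}/\partial p_v=0$ gives $-\log p_v-1-\beta\log v-\mu=0$, i.e. $p_v=e^{-1-\mu}\,v^{-\beta}$: a power law with exponent $\alpha=\beta$ and normalization $e^{-1-\mu}=1/Z$, $Z=\sum_v v^{-\beta}$. Because $S$ is strictly concave in $p$ while both constraints are linear, this stationary point is the unique maximizer of the inner problem, so $f(E)$ is attained only by a power law, with $\beta$ fixed by the requirement that its mean log-value equal $E$. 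Identifying $\beta=1/(kT)$ recovers exactly the Boltzmann form already derived in the excerpt, and along this one-parameter family Eq.~\ref{eq:Q} reads $Q=\alpha+\log(Z)/E$; maximizing $f(E)/E$ over $E$ (equivalently over $\alpha$) then selects the particular power law that maximizes $Q$. As a consistency check, applying the Lagrange condition directly to $S/E$ with only the normalization constraint yields $\log p_v=-1-\lambda E-(S/E)\log v$, i.e. $p_v\propto v^{-Q}$ --- again a power law, now with exponent equal to $Q$ itself (and, comparing with Eq.~\ref{eq:Q}, forcing $\log Z=0$ at the optimum).

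The step I expect to be the real obstacle is not the stationarity algebra but the domain and attainment questions. For $Q$ to be defined one must work with $E>0$: if $1\in V$, concentrating mass at $v=1$ drives $E\to 0$ with $S\to 0$ but $S/E\to\infty$, so the statement has to be read with the energy held fixed, or with $V$ bounded away from $1$; and when $V$ is infinite one needs $\beta$ in the range where $Z=\sum_v v^{-\beta}$ converges (e.g. $\beta>1$ for $V=\{1,2,\dots\}$) so that the power law is an honest distribution. I would handle this by doing the optimization at fixed $E$, where $f(E)$ is well defined, and then showing $E\mapsto f(E)/E$ attains its supremum in the admissible range using concavity of $f$ (inherited from the max-entropy structure) and the boundary behavior of the ratio. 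The remaining point to secure is that the interior stationary point is not a saddle, which is precisely where strict concavity of $S$ on the simplex --- hence pseudoconcavity of $S/E$ on $\{E>0\}$, so that stationarity implies a global maximum --- carries the argument.
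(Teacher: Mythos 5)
Your proposal is correct and takes a genuinely more structured route than the paper. The paper's proof is exactly your ``consistency check'': it differentiates $Q=S/E$ directly by the quotient rule, sets the derivative to zero (without even a normalization multiplier --- the constant gets absorbed into the proportionality), and reads off $p_v\propto v^{-S/E}$. Your main argument instead decomposes the problem into a fixed-$E$ maximum-entropy problem (solved by Lagrange multipliers, with strict concavity guaranteeing the power law is the \emph{unique} conditional maximizer) followed by an outer optimization of $f(E)/E$. What this buys you is precisely the part the paper skips: the paper verifies only stationarity, never that the critical point is a maximum rather than a saddle, and never that a maximum exists at all. Your worry about attainment is in fact well founded and exposes a real gap in the theorem as stated: in the paper's own setting $V=\{1,2,\dots\}$ with $v_{\min}=1$, putting mass $1-\epsilon$ at $v=1$ and $\epsilon$ at $v=2$ gives $Q\approx(1-\log\epsilon)/\log 2\to\infty$, so $Q$ is unbounded on the simplex and no distribution maximizes it; equivalently, your normalization condition $\log Z=0$ at a true stationary point, i.e.\ $\sum_{v\in V}v^{-Q}=1$, is unsatisfiable when $1\in V$. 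The theorem therefore only holds in the conditional sense you propose (fixed $E$, or $V$ bounded away from $1$), where your max-entropy argument gives a complete and rigorous proof; the paper's computation establishes the same power-law form of the critical point but is silent on these issues.
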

\begin{proof}
The proof is from \cite{Mitzenmacher_2004}. To maximize Eq. \ref{eq:eff}, let the derivative of $Q$ with respect to $p_v$ to be 0. The derivative of $Q$ with respect to $p_v$ is, 
\[
(\frac{\partial S}{\partial p_{v}}{E} - \frac{\partial {E}}{\partial p_{v}} S)/{E}^2 
\]
i.e., performing derivative of $S$ w.r.t. $p_v$ in Eq. \ref{eq:ent} and derivative of logarithmic energy $E$ w.r.t. $p_v$, we have
\[
[-(\log(p_v)+1){E}-\log(v)S]/{E}^2.
\]
When it is 0, we have
\begin{equation}
\log(p_v)+1=-\log(v)\frac{S}{E}=\log(v^{-\alpha})
\label{eq:effprove}
\end{equation}
with $\alpha=\frac{S}{E}=Q$, i.e., $p_v \propto v^{-\alpha}$.	
\end{proof}
Theorem \ref{th:eff} indicates not only that the power-law maximizes the entropy efficiency $Q$, the power-law coefficient is \emph{approximated} to it. 

Note that entropy efficiency can also be defined for linear energy levels with ${E} = \sum_v p_v v$. In this case, maximizing $Q$ (Eq. \ref{eq:effprove}) gives
\begin{equation}
\log(p_v)+1=-v\frac{S}{E}=-\alpha v
\end{equation}
i.e., $p_v \propto e^{-\alpha v}$, an exponential distribution as the Boltzmann distribution. Therefore, entropy efficiency is maximized when the corresponding distribution matches the energy levels: power-law for logarithmic and exponential for linear. Analogous to entropy maximization in the second law of thermodynamics for closed systems, entropy efficiency is another general principle that a system would use the minimum energy to produce the same amount of entropy.

The relationship between entropy efficiency $Q$ (shown in Eq. \ref{eq:Q}) and free energy $A$ (shown in Eq. \ref{eq:A}) is obtained by combining these two equations:
\begin{equation}
\frac{Q}{\alpha} = \frac{{E} - A}{E} .
\end{equation}
We call $\frac{Q}{\alpha}$ the {\em free energy reduction ratio}.

\subsection{Entropy Related Properties in Power-law}
Power-law distributions are dominant in the Wikipedia editing statistics where their properties can be fully captured by the power-law coefficient $\alpha$. In the rest of this section, we show how the change of $\alpha$ affects entropy $S$, entropy efficiency $Q$ and entropy reduction $R$, as well as energy per particle ${E}$ and free energy per particle $A$.

Figure \ref{fig:ent_syn} shows the change of entropy, entropy efficiency and entropy reduction, for two sample sizes, 1000 and $10^7$, with a varying power-law coefficient $\alpha$, where dashed lines are for entropy $S$, entropy efficiency $Q$ or entropy reduction $R$ in the uniform distribution with the corresponding number of samples. 
\begin{figure}
	\centering
	\includegraphics[width=2.8in]{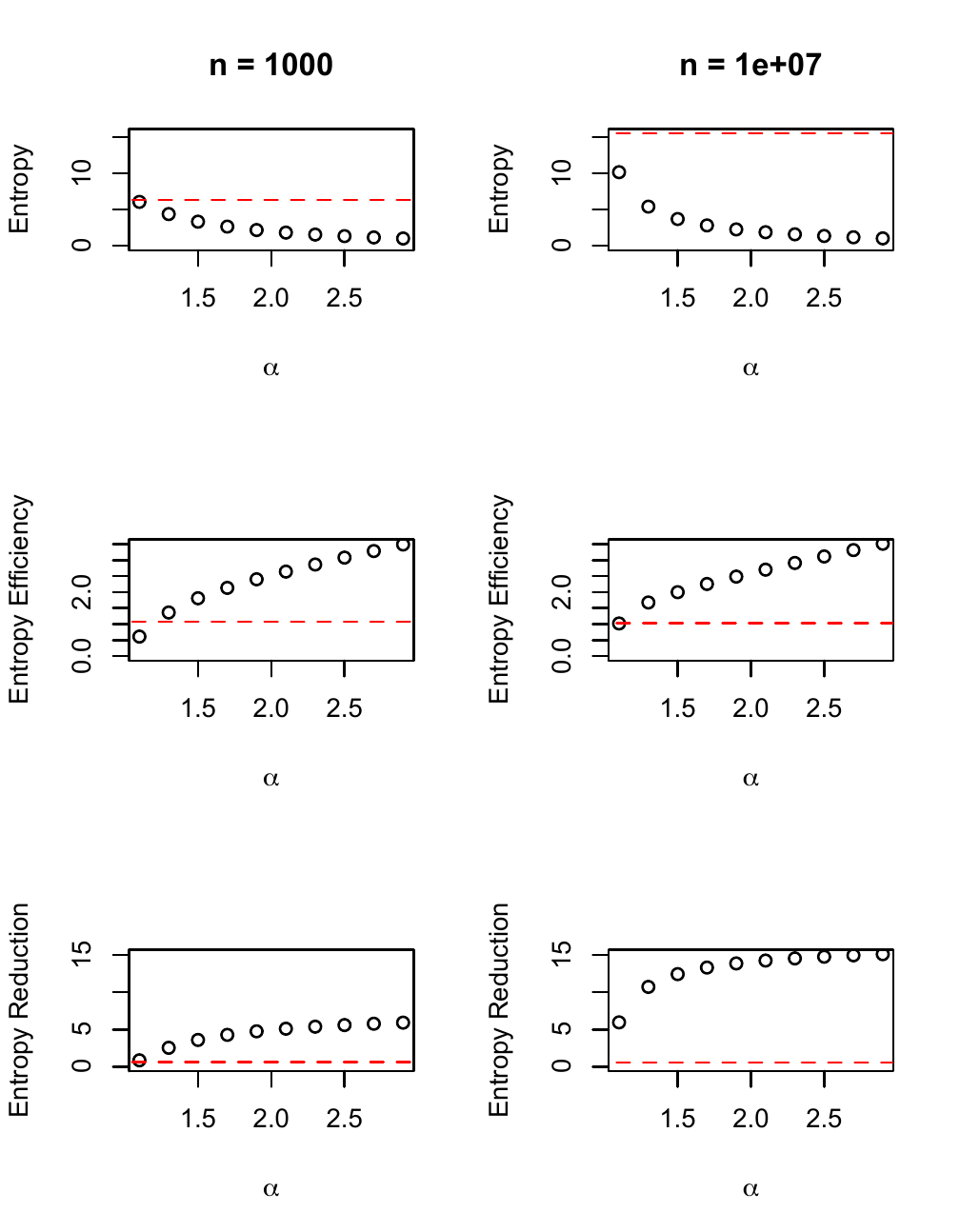}
	\vspace{-0.2in}
	\caption{Entropy, entropy efficiency and entropy reduction vs. power-law coefficients, in two sample sizes: $10^3$ and $10^7$.}
    \label{fig:ent_syn}
    \vspace{-0.0cm}
\end{figure}
We observe that:
\begin{itemize}
\vspace{-0.1in}
\item Both entropy and entropy reduction are growing with the number of samples. However, entropy efficiency is independent of the number of samples. Entropy efficiency grows almost linearly with the power-law coefficient.
\vspace{-0.1in}
\item Uniform distributions have maximum entropy, minimum entropy efficiency and minimum entropy reduction. Clearly, random uniform distributions do not have order or efficiency.
\vspace{-0.1in}
\item Entropy decreases with and entropy reduction increases with the power-law coefficient. However, the rates of increase and decrease reduce with the increase of $\alpha$, and both entropy and entropy reduction are saturated when $\alpha$ approach 2.5. 
\end{itemize}

For the Wikipedia editing statistics, the minimum value is one edit. 
For power-law distributions with minimum value 1, the relationship between the average energy and the power-law coefficient becomes:
\begin{equation}
{E} =\frac {1}{\alpha - 1}
\label{eq:eng}
\end{equation}
where ${E}$ is the average energy per particle. The proof of this relationship is in the Appendix. This implies that (1)  $\alpha>1$ (or $kT<1$) , and (2) $\alpha \rightarrow 1$, ${E} \rightarrow \infty$. Also if the  minimum $v$ to be 1, we have
\begin{equation}
Z  = \sum_{1}^{\infty} v^{-\alpha} = \zeta(\alpha)
 \end{equation}
 where $\zeta()$ is zeta function. Therefore the free energy (Eq. \ref{eq:A}) for a power-law distribution with coefficient $\alpha$ is
 \begin{equation}
 A = -kT\log(Z) = - \frac{\log(Z)}{\alpha} = -\frac{\log(\zeta(\alpha))}{\alpha}.
 \label{eq:freeeng}
 \end{equation}
Figure \ref{fig:energy} shows the relationship between energy (Eq. \ref{eq:eng}) and free energy (Eq. \ref{eq:freeeng}) with a varying power-law co-efficient $\alpha$. We see that when $\alpha$ increases, or temperature decreases,  average energy per particle decreases and free energy increases, both saturated when $\alpha$ approches 4, or $kT$ around 0.25.
\begin{figure}
	\centering
	\includegraphics[width=3.0in]{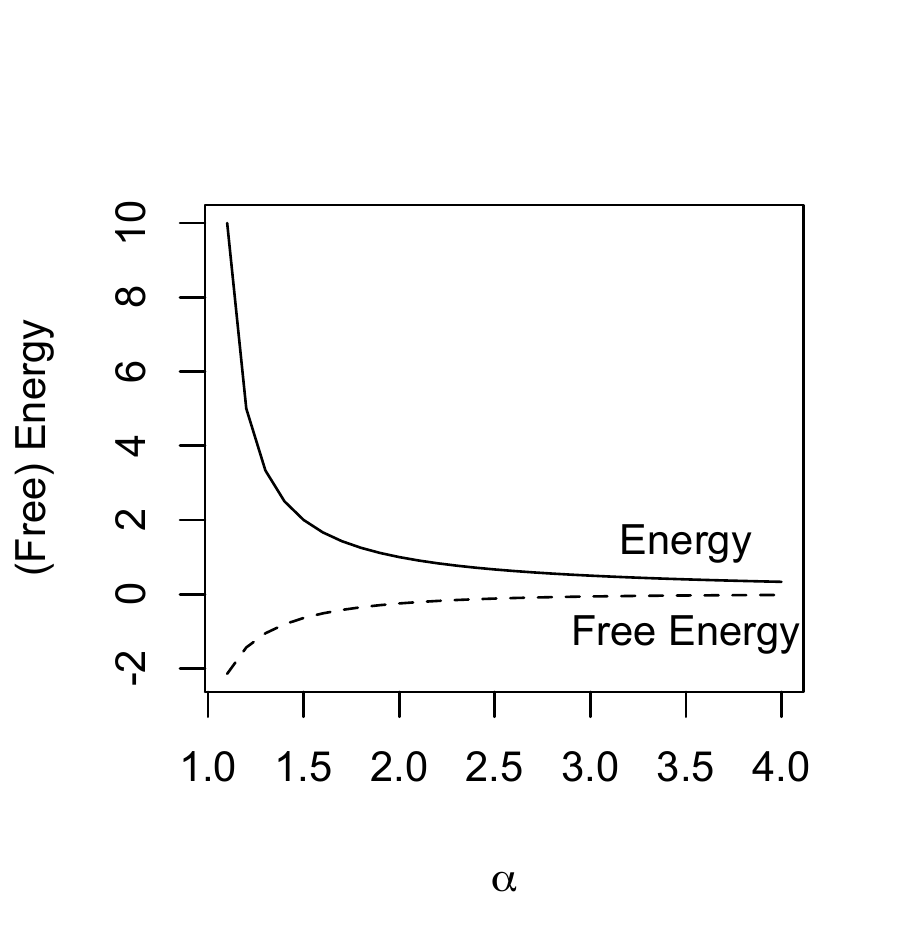}
	  \vspace{-0.2cm} 
	\caption{Energy and free energy with a varying power-law coefficient $\alpha$.}
    \label{fig:energy}
    \vspace{-0.2cm} 
\end{figure}

From these observations, we come to an explanation why in the real world, the power-law coefficient will lie mostly within $1<\alpha<3$. When $\alpha>3$, both free energy and entropy reduction are saturated, meaning it is not very useful for $\alpha$ to be larger or to have lower temperature.

Power-law distributions show fractal structures at multiple levels, in particular, if particles are grouped using their logarithmic scales, the resulted distribution is still power-law. Let $c_n$ be a class of units whose values are between $b^n$ to $b^{n+1}$, where $b>0$ is an arbitrary base. The total number of particles for class $n$ is $N(n)$ that is proportional to:
\begin{equation}
\int_{b^n}^{b^{n+1}} v^{-\alpha} dv = \frac{b^{-n(\alpha-1)}}{\alpha-1}(1-b^{-(\alpha-1)}) \propto b^{-n(\alpha-1)}
\label{eq:class1}
\end{equation}
and the total amount of values from class $n$ is $C(n)$ that is proportional to:
\begin{equation}
\int_{b^n}^{b^{n+1}} v\cdot v^{-\alpha} dv = \frac{b^{-n(\alpha-2)}}{\alpha-2}(1-b^{-(\alpha-2)}) \propto b^{-n(\alpha-2)}.
\label{eq:class2}
\end{equation}
Note that when $\alpha <2$, $C(n)$ increases exponentially with $n$, i.e., more portions of energy come from high energy  particles and when $\alpha >2$, $C(n)$ decreases exponentially with $n$, i.e., more portions of energy come from low energy class particles. When $\alpha=2$, $C(n)$ is a constant, i.e., all classes contribute evenly.

In conclusion, when $\alpha$ increases, entropy decreases, entropy reduction and entropy efficiency increase, and overall contributions are shifted from high energy particles to low energy particles.

\section{III. Is Wikipedia becoming more efficient?}  
We analyzed the Wikipedia editing data from January 2002 to December 2009. Prior research has shown that the growth of Wikipedia follows Logistic or Gomperz \cite{Suh_2009} curves and power-law distributions are everywhere \cite{Wilkinson_2008}. We would like to find out what properties other than volume growth have changed during the evolution of Wikipedia. In particular, we would like to answer the question of the degree to which Wikipedia becomes more efficient, from thermodynamic principles.

We consider Wikipedia as an open system with active editors in each month as particles, and their total number of edits of the month as values. The fact that the distributions of edits are power-law suggests that energy is logarithmic in terms of the number of edits. The system is open in the sense that there are editors joining and dropping from month to month. The sum of the logarithmic contributions from active editors (with minimum one edit) is the total energy of the system for the month. The number of active editors ranges from 1000 in early months to 600,000 in later months. We will overview the evolution of entropy, entropy reduction, and entropy efficiency of the Wikipedia's monthly editing activities.

\begin{figure}
	\centering
	\includegraphics[width=2.8in]{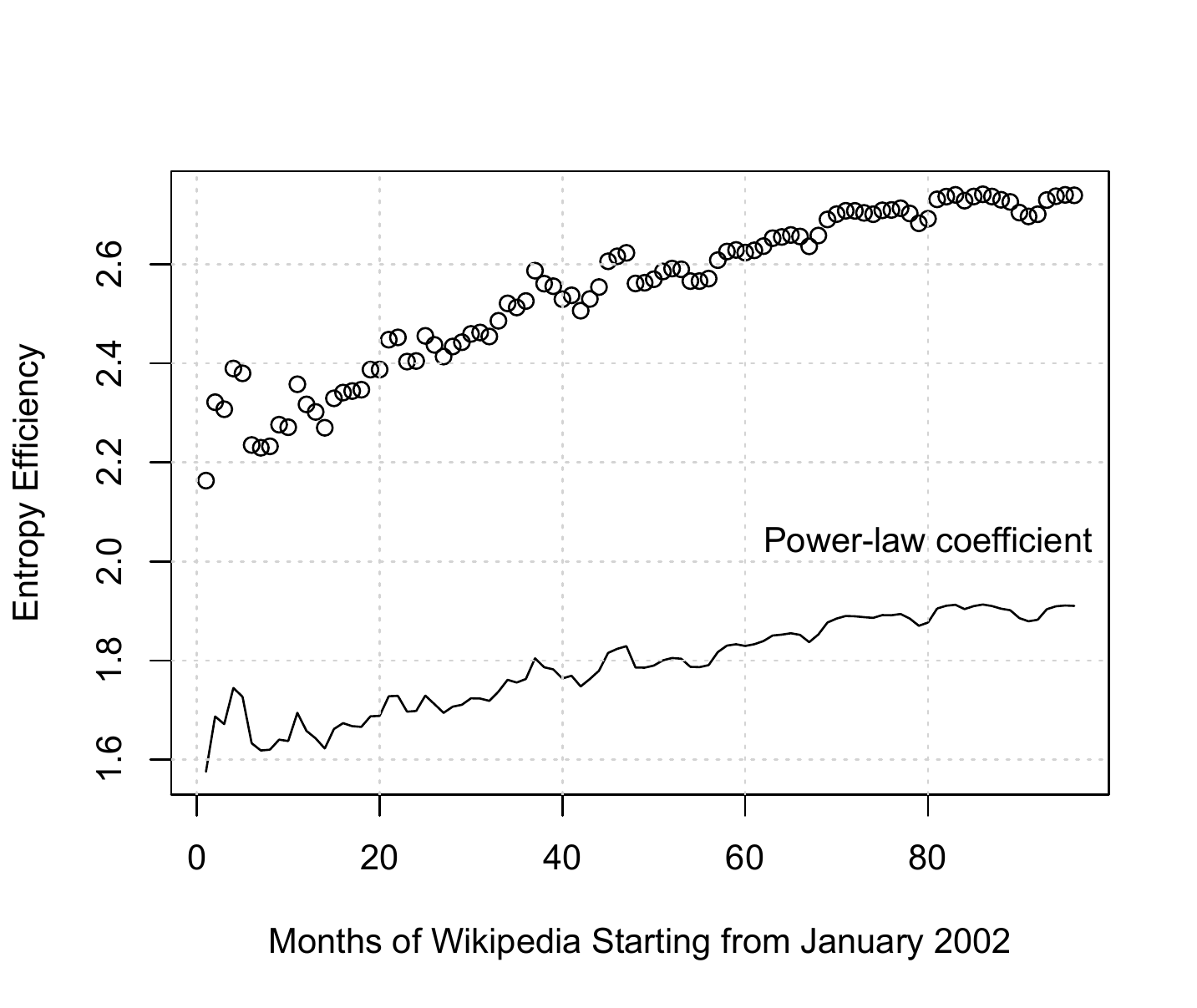}\\
	(a) \\
	\includegraphics[width=2.8in]{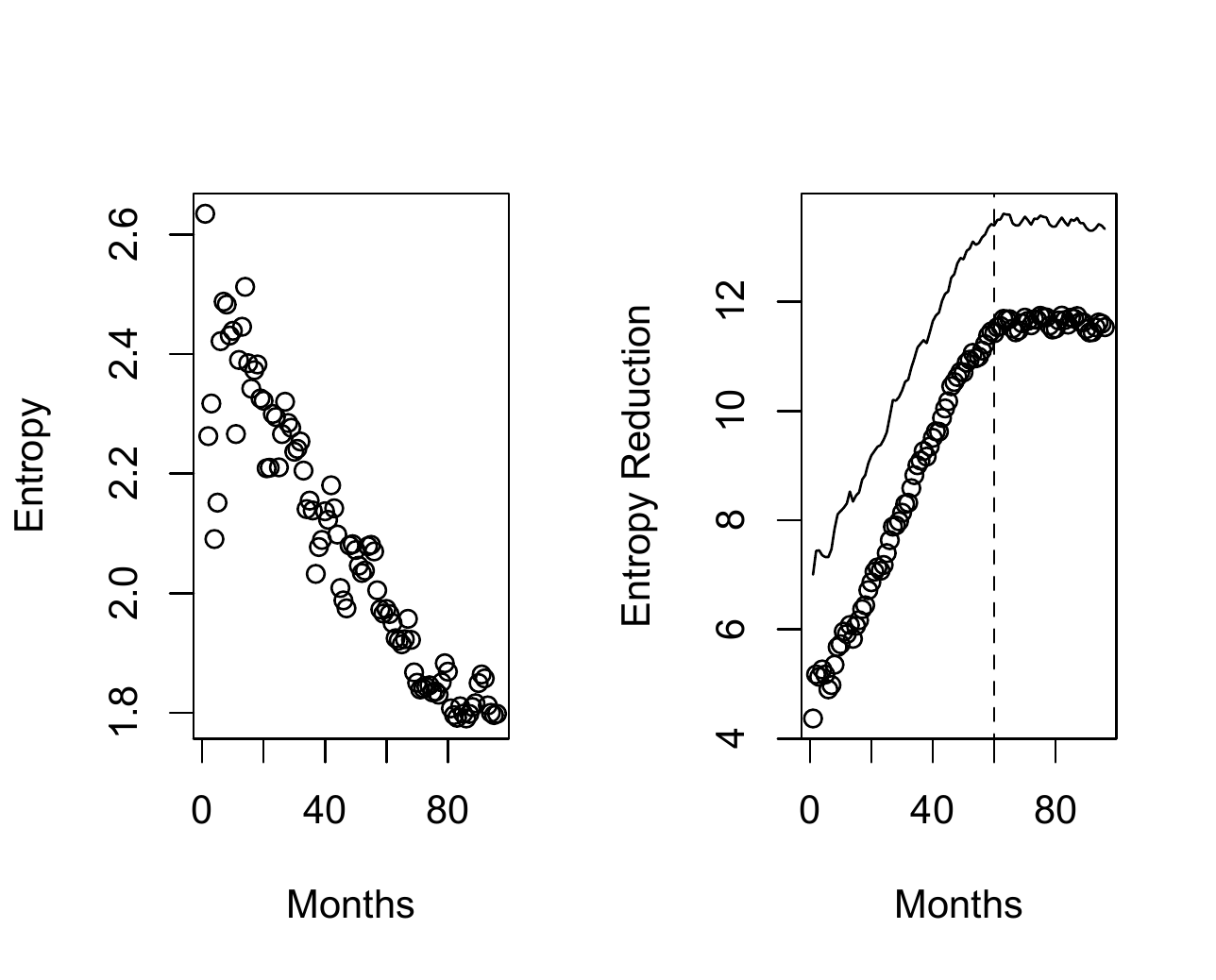}\\
	(b) \\	
	\includegraphics[width=2.8in]{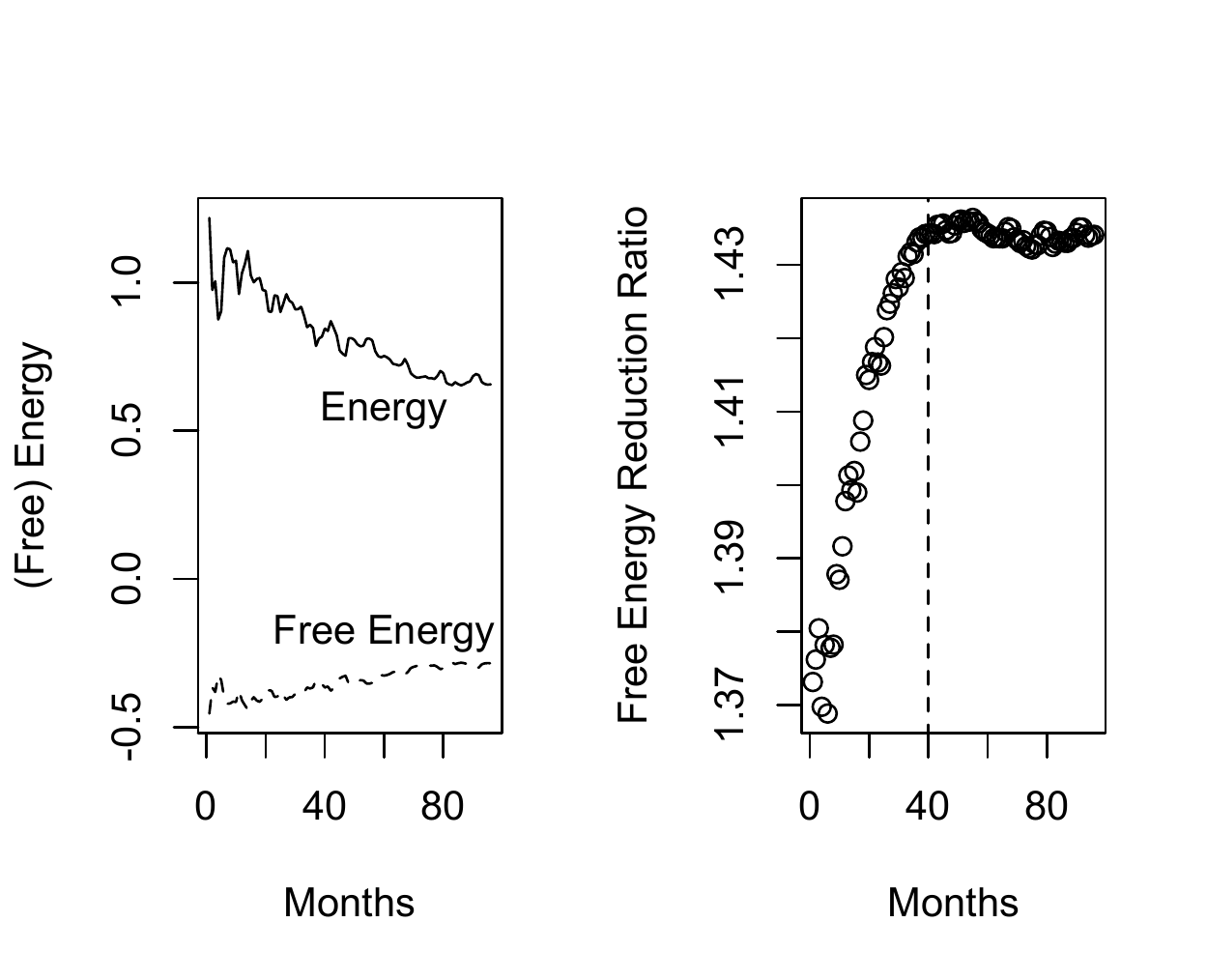}\\
	(c)	
	\caption{(a) Evolution of entropy efficiency and power-law coefficient in Wikipedia.
	(b) Evolution of entropy and entropy reduction in Wikipedia over 96 months. The solid line indicates the maximum entropy, $\log(N)$, where $N$ is the number of editors of the month.
	(c) Evolution of energy and free energy per editor, and free energy reduction ratio.
	}
    \label{fig:TEntropy}
\end{figure}
Figure \ref{fig:TEntropy}(a) shows entropy efficiency and power-law coefficient over the 96 months of the evolution. As we observe, power-law coefficient has grown from 1.5 to 2.0 steadily over the months and entropy efficiency has grown in almost the same rate. 
Figure \ref{fig:TEntropy}(b) shows the evolution of entropy and entropy reduction over 96 months of the history. Here the decreasing of entropy and increasing of entropy reduction suggest the increasing order in the editing system. 
Figure \ref{fig:TEntropy}(c) shows the change of energy per editor, the growth of free energy and the free energy reduction ratio. The ratio has been growing but saturated almost 20 months (at 40 months) before the saturation of the active editors (at 60 months). This seems to suggest that the saturation of the free energy reduction ratio maybe the cause of the saturation of the number of editors.

Another interesting question is -- how does the editor structure evolve? We classify editors according to their levels of energy, or logarithm of their edits, i.e., 1-10 edits as class 1, 11-100 edits as class 2, 101-1000 edits as class 3, etc. As we discussed in Section II, the number of editors in class $n$ is proportional to $10^{-n(\alpha-1)}$ (Eq. \ref{eq:class1}), and the total edits from class $n$ is proportional to $10^{-n(\alpha-2)}$ (Eq. \ref{eq:class2}). Since $\alpha$ has been increasing over the months, contributions are shifted from higher classes to lower classes, and now relatively even from all classes since $\alpha$ approaches 2 (from Eq. \ref{eq:class2}). It confirms that Wikipedia is becoming a media for the masses in later months, rather than for elites in early months. 
Figure \ref{fig:Editors_evo} and Figure \ref{fig:Edits_evo} show the total number of editors in each class and total contributions from each class, respectively, during 96 months of evolution. As we observe, except for high level classes which are noisy, the logarithmic volume of each class is proportional to the class index (Figure \ref{fig:Editors_evo}) and overall contributions from each class are getting relatively even (Figure \ref{fig:Edits_evo}), a result from increasing power-law coefficient in power-law distributions.
\begin{figure}
	\centering
	\includegraphics[width=3.0in]{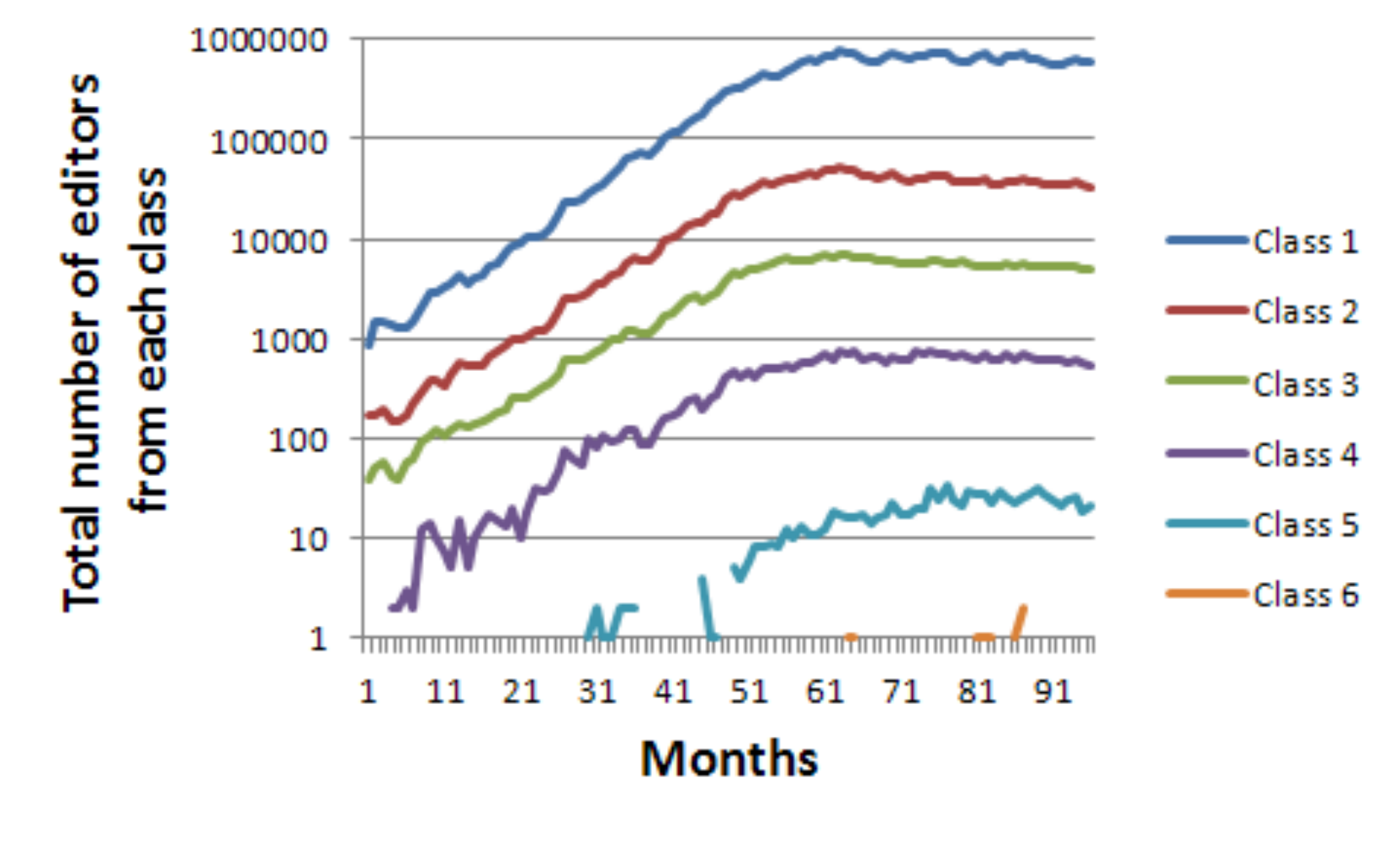}
	\caption{The total number of active editors from different classes over 96 months.}
    \label{fig:Editors_evo}
    \vspace{-0.0cm}
\end{figure}
\begin{figure}
	\centering
	\includegraphics[width=3.0in]{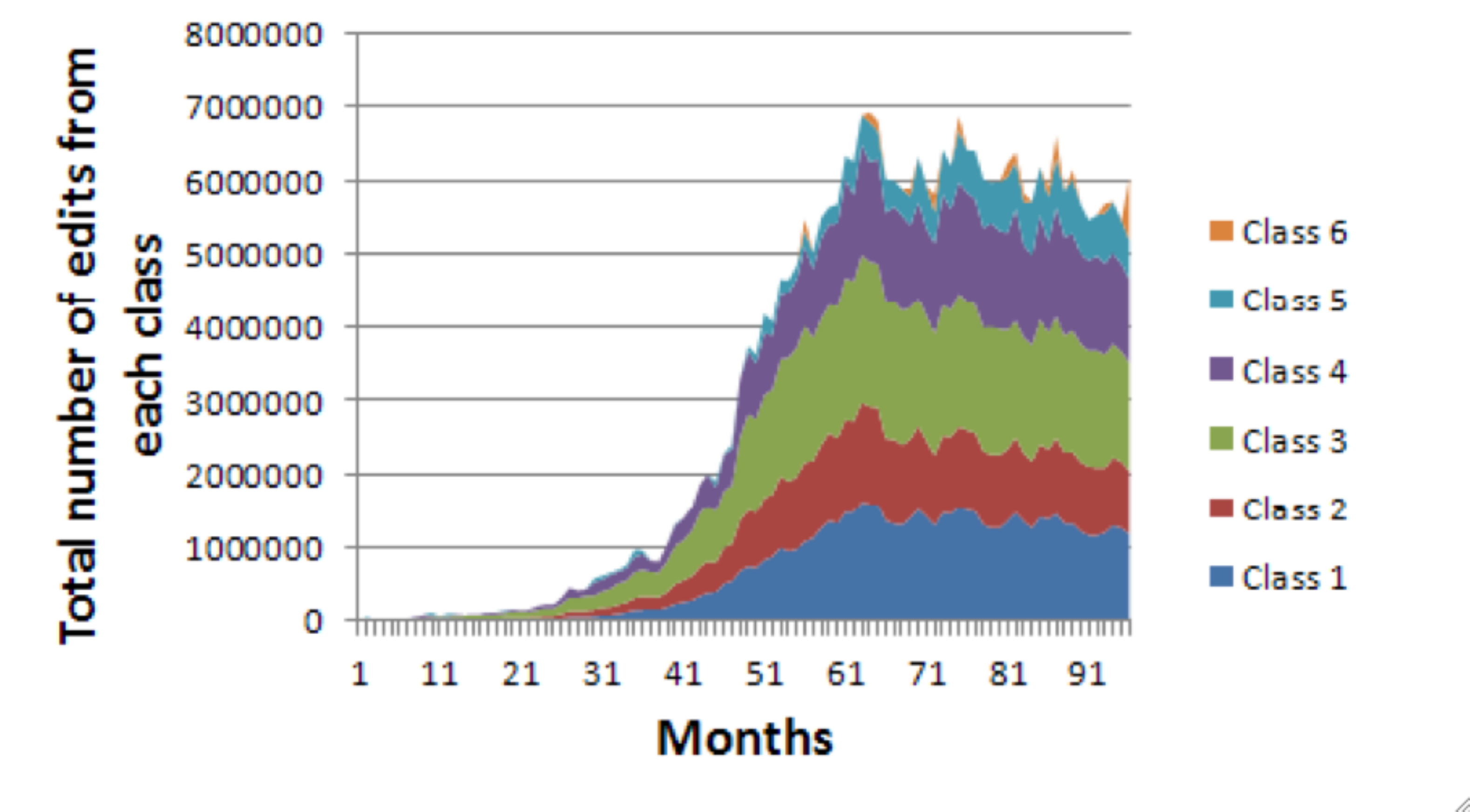}
	\caption{Total number of edits from each class over 96 months.}
    \label{fig:Edits_evo}
    \vspace{-0.0cm}
\end{figure}

Entropy-based metrics support previous findings about collaborative editing and bring new perspectives. In \cite{Wilkinson_2008}, the author reported that for a web media, the higher power-law coefficient $\alpha$, the higher entry barrier it has for an author to contribute a new edit. For example, it is easier to contribute a new edit to Digg (low $\alpha$), whereas it is harder to contribute a new edit to Wikipedia (high $\alpha$). We see that in Wikipedia, $\alpha$ has been growing steadily (Figure \ref{fig:TEntropy}(a)). Due to increasing order and efficiency, it is getting harder to add more edits for an editor. 

In conclusion, we believe Wikipedia has become more efficient in terms of entropy efficiency, and more ordered according to entropy reduction. The increasing power-law coefficient causes the shift of the contributions from elites to crowd. The saturation of free energy reduction ratio may cause the saturation of the active editors.

\section{IV. Does Efficiency Imply Quality?}
We have noticed that Wikipedia as a whole has evolved to be more efficient. We now like to see if those measurements are applicable to identifying the quality of pages. 
Based on this analysis, we examine (1) how the efficiency of a page correlates to its readership or quality, and (2) what is the major factor that separates high and low quality pages?

\subsection{Data Setting}
We choose the pages with at least 4500 edits and that are saturated as of December 2009. By saturation we mean that a page has gained less than 5\% edits in the latest 10\% of time since its creation. This selection ensures that each page being analyzed has gained a stable editing structure that is less noisy. Accordingly, there are a total of 962  pages being analyzed, and entropy-based measurements for each page are computed.

We have shown that power-law distributions maximize entropy efficiency for logarithmic energy levels. To distinguish \emph{power-law pages} from \emph{non-power-law pages}, we use the Komogrov-Smirnov statistic $D$ \cite{Casella_2001}, which is the maximum difference of the fitted and the empirical c.d.f.'s. Empirically, we found that $D = 0.1$ provides a good separation of power-law pages from those that are not. In total, there are 906 power-law pages and 56 non-power-law pages. 

\begin{figure*}
	\centering
	\includegraphics[width=7.0in]{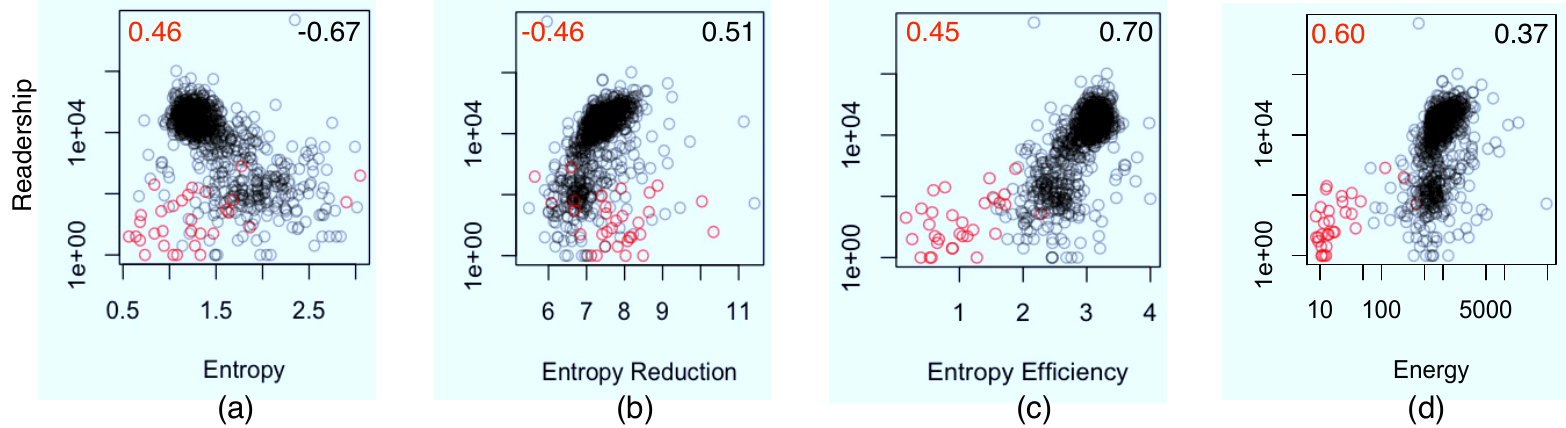} \\ \vspace{0.2cm}
	\includegraphics[width=7in]{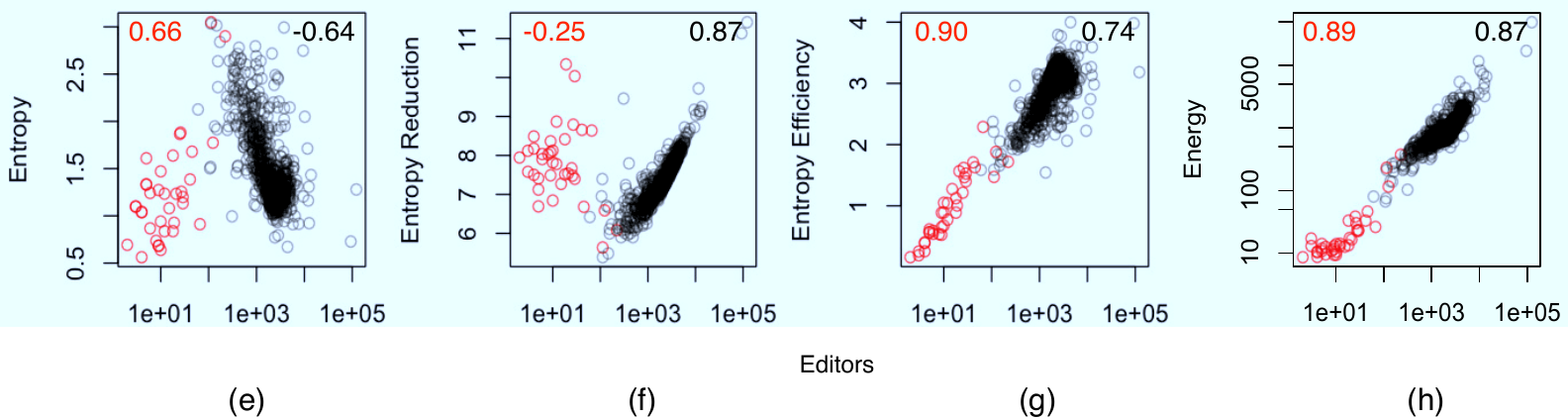} 
\caption{Top: The Readership of pages versus (a) Entropy, (b) Entropy Efficiency, (c) Entropy Reduction, and (d) Total Energy. Bottom: The number of editors of pages versus (e) Entropy (f) Entropy Efficiency (g) Entropy Reduction and (h) Total Energy. Black circles indicate power-law pages and red non-power-law.}
    \label{fig:reader_scatter}
    \vspace{-0.0cm}
\end{figure*}
\subsection{Efficiency v.s. Readership}
One of the main questions to answer is whether the order or efficiency of a page, measured by entropy-based metrics, corresponds to its quality. Here we look at a page's readership, the number of clicks it gets during some time interval. We use the readership data of one week in February 2009, but we noticed that using data of other points of times within half of the year also obtain similar behaviors. In Figures \ref{fig:reader_scatter}(a-d), we summarize how readership relates to (a) entropy, (b) entropy reduction, (c) entropy efficiency, and (d) total energy for each page, in which non-power-law pages are marked as red. The correlation coefficients ($\rho$) between readership and these metrics are also shown on the figures.

For the power-law pages (black circles), high readership associates with low entropy (Figure \ref{fig:reader_scatter}(a), $\rho$ = -0.67), high entropy reduction (Figure \ref{fig:reader_scatter}(b), $\rho$ = 0.51), and high entropy efficiency (Figure \ref{fig:reader_scatter}(c), $\rho$ = 0.70)---all suggesting that higher order and efficiency associate with higher readership. This is remarkable because there is no connection between readership and any of these metrics from how they are computed. One intuitive explanation is: as the editing structure become more ordered and efficient, the quality of the produced content improves, and thus the page draws more readers. 

Note also for power-law pages,  content quantity (in this case, the total energy) does not affect readership much, since there is only very small correlation ($\rho = 0.37$) between total energy and readership from Figure \ref{fig:reader_scatter}(d). Therefore we claim that quantity does not imply quality, but efficiency does.

On the other hand, all non-power-law pages have low readership: 
the maximum readership of non-power-law pages is 1111, whereas the median of all pages and power-low pages are 11163 and 11964, respectively. 
In addition, there is a positive correlation between entropy and readership (instead of negative for power-law cases) and negative correlation between entropy reduction and readership (instead of positive). The total energy of non-power-law pages are significantly lower. There is, however, positive correlations between total energy and readership for non-power-law cases ($\rho = 0.60$). The most interesting fact is: entropy efficiency, however, is correlated positively with readership in both power-law ($\rho=0.70$) and non-power-law ($\rho=0.45$) cases. 

In addition to total energy, we have also analyzed the correlation between total number of edits with readership (Figure \ref{fig:edits}). Note that there is almost no correlation between these two, for both power-law ($\rho=0.05$) and non-power-law cases ($\rho = 0.12$). In addition, although non-power-law pages are low in readership,  they in fact have more edits than the power-law ones in terms of mean ($9425.9 > 6537.4$) and median ($5532.5 > 4677.5$).  
The total energy, however, correctly separates power-law from non-power-law pages, i.e., low energy ones are not power-law.
This further reinforces our choice of using logarithmic energy in thermodynamic equations in this context.
\begin{figure}[h]
	\centering
	\includegraphics[width=3.2in]{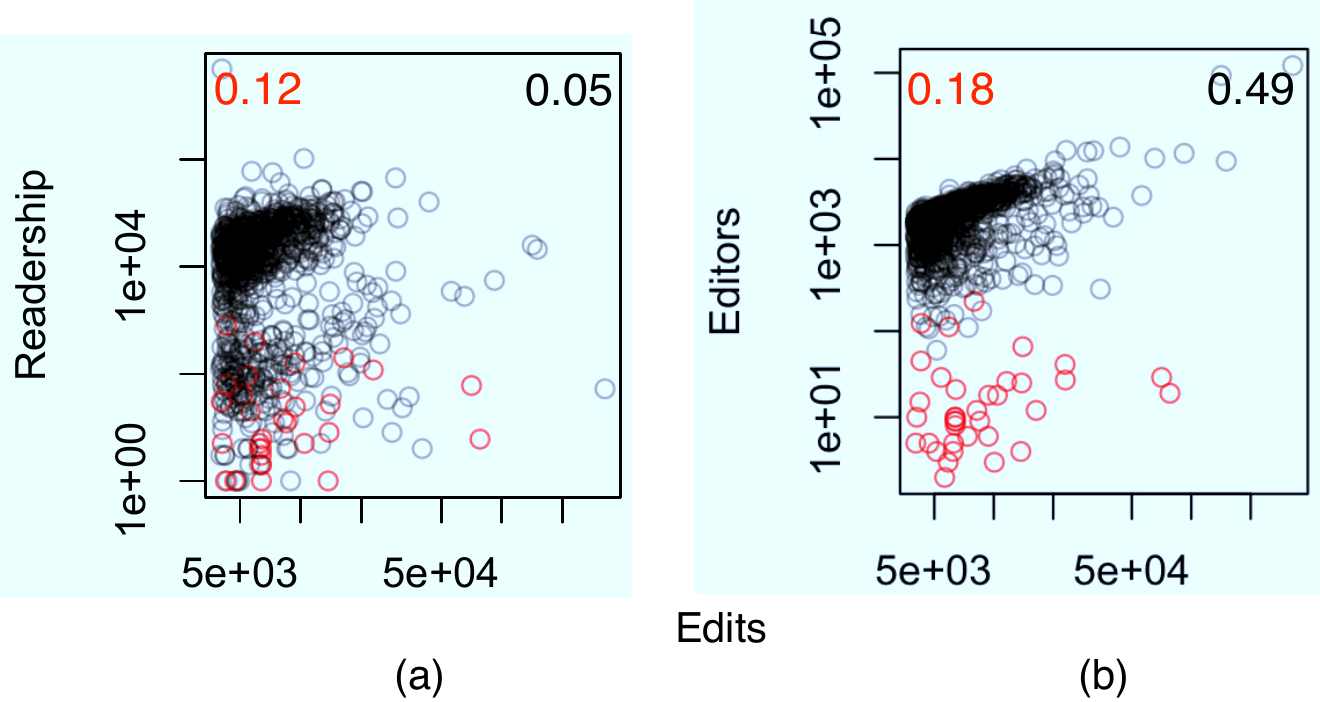}
	\caption{Total edits vs. readership and number of editors, where non-power-law pages are marked as red.}
    \label{fig:edits}
    \vspace{-0.0cm}
\end{figure}

\subsection{Efficiency vs. Editor Base}
In this section we like to show what actually correlates the most with efficiency and readership. The answer is, surprisingly, the number of editors.
Figures \ref{fig:reader_scatter}(e-h) summarize how the same metrics in Figures \ref{fig:reader_scatter}(a-d) correlate to a page's number of editors, where the correlation coefficients are shown on the figures.  First of all, for all non-power-law pages, the number of editors is small ($<100$).  In contrast to power-law pages, the number of editors is positively correlated to entropy ($\rho=0.66$) and slightly negatively correlated to entropy reduction ($\rho=-0.25$). However, what the most interesting fact is: the number of editors is highly positively correlated with entropy efficiency in both power-law ($\rho=0.74$) and non-power-law $(\rho=0.90)$ cases.  And, not too surprisingly, but still interestingly, the total energy is highly positively correlated with the number of editors, in both power low ($\rho=0.87$) and non-power-law ($\rho = 0.89$) cases. The simple fact is that the more editors the more efficiency, and the more efficiency the better quality, and these three metrics maybe positively reinforce each other.

From Figures \ref{fig:reader_scatter}(e-h), we see that the separation between clusters become clearer for all metrics. Roughly 100 editors seems to be the boundary between power-law and non-power-law pages: before this boundary (non-power-law pages), the increase on editors results in lower order, characterized by increased entropy and decreased entropy reduction. After this boundary (power-law pages), more editors results in higher order, i.e., lower entropy ($\rho$ = -0.64) and higher entropy reduction ($\rho$ = 0.87). This suggests that the system may have a phase transition at the boundary - growing from order to disorder and then from disorder to order. In both cases, however, the entropy efficiency and total energy grow with the number of editors.

Another interesting distinction introduced by the transition is \emph{elitism} versus \emph{wisdom of the crowd}. From Figure \ref{fig:edits} (b), we see that non-power-law pages show a form of elitism, characterized by relatively few elite editors single-handedly contribute an un-proportional amount of edits. In contrast, the power-law pages show a form of crowd wisdom, characterized by much more editors coming up with comparable, or slightly smaller number of edits. Since Figure \ref{fig:reader_scatter} already shows that power-law pages tend to have more readership than non-power-law ones, it implies that the nature of Wikipedia is a true media of the masses, where pages produced by crowd wisdom will have higher quality and thus more readership compared to that produced by a few elites.

We have shown that high power-law coefficient (or low temperature) implies high entropy efficiency. It is interesting that it is also true for non-power-law cases in our data.
\begin{figure}
	\centering
	\includegraphics[width=1.6in]{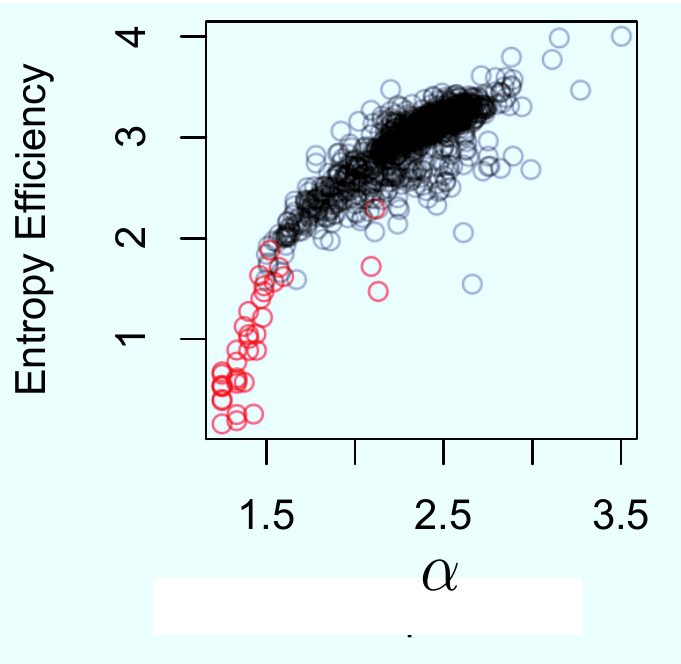}
    \vspace{-0.7cm}
	\caption{The power-law slope $\alpha$ of pages versus entropy efficiency, where non-power-law pages are marked as red.}
    \label{fig:alpha}
    \vspace{-0.0cm}
\end{figure}
Figure \ref{fig:alpha} provides the scatterplot between each page's $\alpha$ and its entropy efficiency. 

In conclusion, we claim that (1) there are positive correlations and reinforcement among the number of editors, the efficiency of edit distributions among editors, and the readership of pages, (2) although the total energy of a page does not correlates with the quality/readership of the page, it  clearly identifies the group of bad pages (i.e., low energy ones), and (3) there is a phase transition in the entropy measurements with the number of editors.

\section{V. Conclusion} We have studied the efficiency of social collaborative behaviors through thermodynamic principles; in particular, we discovered 
(1) editors' energy levels correspond to the logarithmic of their number of edits, and the power-law of edit distributions arises naturally from thermodynamic principles, and
(2) while entropy or entropy reduction characterizes order, maximizing entropy efficiency is one of the basic thermodynamic principles. By applying these measurements to the Wikipedia dataset, we see that (1) Wikipedia is becoming more efficient, (2) entropy efficiency is correlated with the quality of the social collaboration, and (3) there is a suggestive phase transition separated by a particular number of editors, the system may self-organize into efficient and ordered states if the threshold is passed. Note that although we have used ``number of edits" as the source of contributions, such analysis is also applicable to other metrics, e.g., length of contributions from editors.

In the future work, we like to understand what causes the phase transition by developing both microscopic and macroscopic  evolutionary models of editing behaviors. Such models may give insights and predictions for the success and failure of social collaborations.

\section{Acknowledgment}
We like to thank Dr. Bongwon Suh for extracting the Wikipedia meta-data for this analysis, and anonymous reviewers and Dr. Todd Hylton for valuable comments. This work was partly sponsored by Defense Advanced Research Projects Agency (DARPA) Defense Sciences Office (DSO) Program: Thermodynamically Evolving Robust and Adaptive Physical Intelligence (THERA-PI) under Contract No. HR0011-10-C-052.

\bibliography{reference}  

\appendix
\begin{theorem}
Given a collection $\{v_{i}|i=1..N\}$ satisfying power-law distribution with power-law coefficient $\alpha$, let average energy be ${E}=\frac{\sum_{1}^{N} \log(v_{i})}{N}$. If the minimum value of $v$ is 1, ${E}=\frac{1}{\alpha-1}$.
\end{theorem}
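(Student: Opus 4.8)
The plan is to pass to the continuous (large-$N$) approximation of the power-law, in which the normalized density on $[1,\infty)$ is $p(v) = (\alpha-1)\,v^{-\alpha}$; the normalization constant comes from $\int_1^\infty v^{-\alpha}\,dv = \tfrac{1}{\alpha-1}$, which already forces $\alpha>1$ (consistent with the remark following Eq.~\ref{eq:eng}). Under this density, the empirical average $E = \frac1N\sum_i \log(v_i)$ converges to the expectation $\int_1^\infty \log(v)\,p(v)\,dv$, so it suffices to evaluate that integral.

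First I would write $E = (\alpha-1)\int_1^\infty \log(v)\,v^{-\alpha}\,dv$ and make the substitution $u=\log v$, so $v=e^u$ and $v^{-\alpha}\,dv = e^{-(\alpha-1)u}\,du$, turning the integral into $(\alpha-1)\int_0^\infty u\,e^{-(\alpha-1)u}\,du$. Using the standard moment identity $\int_0^\infty u\,e^{-au}\,du = 1/a^2$ with $a=\alpha-1>0$ gives $E = (\alpha-1)\cdot\frac{1}{(\alpha-1)^2} = \frac{1}{\alpha-1}$, which is the claimed formula. This also matches the class-wise computation in Eq.~\ref{eq:class2}: the total ``energy'' summed over logarithmic classes is geometric with ratio $b^{-(\alpha-1)}$, whose normalized mean class index is again $1/(\alpha-1)$ (taking $b=e$), providing an independent cross-check.

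The main obstacle is not the integral, which is routine, but justifying the passage from the finite collection $\{v_i\}$ to the idealized continuous density: one must argue that treating the discrete power-law on $\{1,2,3,\dots\}$ by its continuous counterpart on $[1,\infty)$ introduces only lower-order corrections, and that the sample mean $\frac1N\sum_i\log(v_i)$ concentrates around the population mean as $N\to\infty$ (a law-of-large-numbers statement, valid since the first moment of $\log v$ is finite for $\alpha>1$). I would state this approximation explicitly as the regime in which the result holds, note the convergence requirement $\alpha>1$, and otherwise keep the derivation at the level of the continuous density, as is standard in the statistical-mechanics style used throughout the paper.
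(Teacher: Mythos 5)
Your proof is correct, but it takes a different route from the paper's. The paper's proof is a one-line citation: it invokes Newman's maximum-likelihood estimator for the power-law exponent, $\alpha = 1 + N\bigl[\sum_{i=1}^{N}\log(v_i/v_{\min})\bigr]^{-1}$, sets $v_{\min}=1$, and inverts to get $E = 1/(\alpha-1)$. You instead compute the population mean of $\log v$ directly under the continuous density $p(v)=(\alpha-1)v^{-\alpha}$ on $[1,\infty)$, observing via the substitution $u=\log v$ that $\log v$ is exponentially distributed with rate $\alpha-1$, whence its mean is $1/(\alpha-1)$. The two arguments are close cousins --- Newman's estimator is itself derived from essentially the same integral --- but yours is self-contained and makes the mechanism transparent (the logarithm of a Pareto variable is exponential, so $E$ is just the mean of that exponential), whereas the paper's outsources the computation to a citation and implicitly treats the MLE relation as an exact identity for the sample. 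Your version also surfaces the hypotheses more honestly: you note explicitly that $\alpha>1$ is needed for convergence, that the discrete-to-continuous approximation must be justified, and that the sample average concentrates around the population mean by the law of large numbers --- caveats the paper glosses over. The one thing the paper's route buys is brevity and a direct link to how $\alpha$ is actually estimated from data in Section III, so the identity $E=1/(\alpha-1)$ holds \emph{exactly} for the fitted $\hat\alpha$ on any finite sample, not merely asymptotically; your population-level statement and the paper's sample-level statement are complementary rather than conflicting.
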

\begin{proof}
According to \cite{Newman_2005}, we have 
\(
\alpha = 1+\frac{N}{\sum_{i=1}^{N} \log(\frac{v_i}{v_{\min}})}
\)
Since $v_{\min}=1$, we have
\(
\alpha = 1+\frac{1}{E}, \mbox{i.e.}, {E}  = \frac{1}{\alpha-1}
\)
\end{proof}

\end{document}